\DeclareSymbolFont{rsfscript}{OMS}{rsfs}{m}{n}
\DeclareSymbolFontAlphabet{\mathrsfs}{rsfscript}
\DeclareMathOperator{\dt}{.}
\DeclareMathOperator{\Syn}{Syn}
\newcommand{\Fact}{\textit{Fact}}%{\operatorname{Fact}}
\begin{document}
\title{Finitely generated ideal languages \\ and synchronizing automata}
\author{Vladimir V. Gusev, Marina I. Maslennikova, Elena V. Pribavkina}

\institute{Ural Federal University, Ekaterinburg, Russia\\
\email{vl.gusev@gmail.com, maslennikova.marina@gmail.com, elena.pribavkina@usu.ru}}

\maketitle

\begin{abstract}
%MSA is not strongly connected
%study sc synch
%constructions from comb words
%
We study representations of ideal languages by means of strongly
connected synchronizing automata. For every finitely generated ideal
language $L$ we construct such an automaton with at most $2^n$ states,
where $n$ is the maximal length of words in $L$. Our constructions are based
on the De Bruijn graph.\\
%
%We study finitely generated ideal languages as languages of synchronizing words. We provide an algorithm to construct a strongly connected automaton for which such a language serves as the language of synchronizing words.\\
\textbf{Keywords:} ideal language, synchronizing automaton, synchronizing word, reset complexity.
\end{abstract}

\section{Introduction}
Let $\mathscr{A}=\langle Q,\Sigma,\delta\rangle$ be a \textit{deterministic finite automaton} (DFA for short),
where $Q$ is the \textit{state set}, $\Sigma$ stands for the \textit{input alphabet},
and $\delta: Q\times\Sigma\rightarrow Q$ is the \textit{transition function} defining
an action of the letters in $\Sigma$ on $Q$.
%The action extends in a natural way to an action $Q\times \Sigma^{*}\rightarrow Q$ of
%the free monoid $\Sigma^{*}$ over $\Sigma$; the latter action is also denoted by $\delta$.
When $\delta$ is clear from the context, we will write $q\dt w$ instead
of $\delta(q,w)$ for $q\in Q$ and $w\in \Sigma^{*}$.
%In the theory of formal languages the definition of a DFA usually includes the
%set $F\subseteq Q$ of \textit{terminal states} and an \textit{initial state} $q_0\in Q$.
%We will use this definition when dealing with automata as devices for recognizing languages.
%The language $L\subseteq \Sigma^{*}$ is \textit{recognized} (or \textit{accepted}) by an automaton $\mathscr{A}=\langle Q,\Sigma,\delta, F, q_0\rangle$ if $L=\{w\in\Sigma^{*}\mid \delta(q_0,w)\in F\}$. We also use standard concepts of the theory of formal languages such as regular language, minimal automaton, etc. \cite{Perrin}

A DFA $\mathscr{A}=\langle Q,\Sigma,\delta\rangle$ is called
\emph{synchronizing} if there exists a word $w  \in \Sigma^{*}$ which leaves
the automaton in unique state no matter at which state in $Q$ it is applied: $q\dt w=q'\dt w$
for all $q, q' \in Q$. Any word $w$ with such property is said to be
\textit{synchronizing} (or \emph{reset}) word for the DFA $\mathscr{A}$.
%This notion has been actively studied
%since the work of Jan \v{C}ern\'{y} \cite{Ce64} in 1964.
For the last 50 years synchronizing automata received a great deal of attention.
For a brief introduction to the theory of synchronizing automata we refer the reader
to the recent surveys~\cite{Vo_Survey, Sa05}.
%For motivation and applications we refer the reader
%to the recent surveys~\cite{Vo_Survey, Sa05}.

In the present paper we focus on language theoretic aspects of the theory of synchronizing automata.
We denote by $\Syn(\mathrsfs{A})$ the language of synchronizing words for a given automaton $\mathscr{A}$.
It is well known that $\Syn(\mathrsfs{A})$ is regular~\cite{Vo_Survey}.
%Furthermore, it is a \emph{two-sided ideal}
Furthermore, it is an \emph{ideal}
%(or simply \emph{ideal})
in $\Sigma^*$, i.e. $\Syn(\mathrsfs{A})=\Sigma^{*}\Syn(\mathrsfs{A})\Sigma^{*}$.
On the other hand, every ideal language $L$ serves as a language of synchronizing words for some automaton.
%For example, $\Syn(\mathrsfs{A}_L) = L$, where $\mathrsfs{A}_L$ is the minimal automaton of the language $L$.
For instance, the minimal automaton
%$\mathrsfs{A}_L$
of the language $L$ is synchronized by $L$~\cite{SOFSEM}.
%of $L$ has $L$ as a language of synchronizing words~\cite{SOFSEM}.
%It was observed in \cite{SOFSEM} that the minimal deterministic automaton $\mathrsfs{A}_L$ recognizing an
%ideal regular language $L$ is synchronizing, and $\Syn(\mathrsfs{A}_L)=L$.
Thus, synchronizing automata can be considered as a special representation of ideal languages.
%Thus, we may consider synchronizing automata as a special representation of ideal languages.
%It was noted in~\cite{SOFSEM} that such representation could be extremely effective.
Effectiveness of such representation was addressed in~\cite{SOFSEM}.
%It was proved in~\cite{SOFSEM} that such representation could be
%exponentially smaller than usual representation via minimal automaton of $L$.
%Comparison between traditional minimal(in general regular languages are represented via minimal automaton).
The \textit{reset complexity} $rc(L)$ of an ideal language $L$ is the minimal possible number
of states in a synchronizing automaton $\mathscr{A}$ such that $\Syn(\mathscr{A})=L$.
Every such automaton $\mathscr{A}$ is called \textit{minimal synchronizing automaton} (for brevity, MSA).
%In some cases it could be much more (excponentially) effective than traditional representation as a minimal automaton.
%Usually, regular languages are represented via minimal automaton. Number of states is important.
Let $sc(L)$ be the number of states in the minimal automaton recognizing $L$.
%Then there are languages
%Often $sc(L)$
%is called state complexity of a language $L$.
Then for every ideal language $L$ we have  $rc(L) \leq sc(L)$\footnote{since the minimal automaton is synchronized by $L$}.
Moreover, there are languages $L_n$ for every $n \geq 3$ such that $rc(L_n) = n$ and $sc(L_n) = 2^n - n$, see~\cite{SOFSEM}.
Thus, representation of an ideal language by means of a synchronizing automaton can be exponentially smaller than
``traditional'' representation via minimal automaton.
However,
no reasonable algorithm is known for computing MSA of a given language.
One of the obstacles is that MSA is not uniquely defined.
%by a language.
For instance, there is a language with at least two different
MSA's: one of them is strongly connected, another one has a sink state~\cite{SOFSEM}.
Therefore, some refinement of the notion of MSA seems to be necessary.
Another important observation is the following:
minimal synchronizing automata for the aforementioned languages $L_n$ are
strongly connected. Thus, one may expect that there is always a strongly connected
MSA for an ideal language.
In the present paper
% ( In section~\ref{sigma-n} )
we show that it is not the case.
Moreover, the smallest strongly connected automaton with a language $L$ as the
language of synchronizing words may be exponentially larger than %the minimal automaton of $L$.
a minimal synchronizing automaton of $L$.

Another source of motivation for studying representations of ideal languages by means
of synchronizing automata comes from the famous \v{C}ern\'{y} conjecture.
% In 1964
\v{C}ern\'{y}
already
in 1964
%he
conjectured that
every synchronizing automaton possesses a synchronizing word of length at most $(n-1)^2$.
Despite intensive efforts of researchers this conjecture is still widely open.
We can
restate the
%it
\v{C}ern\'{y} conjecture
%can be restated
%reformulate the \v{C}ern\'{y} conjecture
%can be stated
in terms of reset complexity as follows:
if $\ell$ is the minimal length of words in
an ideal language $L$ then $rc(L)\ge \sqrt{\ell}+1$.
Thus, we hope that deeper understanding of reset complexity will
bring us new ideas to resolve this long standing conjecture.
It is well known that the \v{C}ern\'{y} conjecture holds true whenever
it holds true for strongly connected automata.
In this regard an interesting related question
%has arisen~\cite{PrincId}:
was posed in~\cite{PrincId}:
does every ideal language
serve as the language of synchronizing words for some strongly connected automaton?
For instance, if the answer is negative then there is a way to simplify formal language statement
of the \v{C}ern\'{y} conjecture. Unfortunately, it is not the case. Recently
Reiss and Rodaro~\cite{ReRo13}
for every ideal language\footnote{over an alphabet with at least two letters} $L$ presented a strongly connected automaton $\mathrsfs{A}$ such that
$Syn(\mathrsfs{A}) = L$.
%presented for
%proved that every ideal language can be represented by means of a strongly connected automaton.
%have presented strongly connected
%synchronizing automaton with a given language as a language of synchronizing words.
Their proof is non-trivial and technical.
In the present paper we give simple constructive proof of the fact that every finitely
generated ideal language $L$, i.e. $L = \Sigma^*U\Sigma^*$ for some finite set $U$,
serves as the language of synchronizing words of some strongly connected automaton.
Our constructions reveal interesting connections with classical objects from combinatorics
on words.

\section{Algorithms and automata constructions}

Let $\Sigma$ be a finite alphabet with $|\Sigma|>1$.
Let $L$ be a finitely generated ideal language over $\Sigma$, i.e. $L=\Sigma^*S\Sigma^*$, where $S$ is a finite set of words.
In this section we construct a strongly connected synchronizing automaton for which $L=\Sigma^*S\Sigma^*$.

First recall some standard definitions and fix notation. A word $u$ is a \emph{factor} (\emph{prefix}, \emph{suffix})
of a word $w$, if $w=xuy$ ($w=uy$, $w=xu$ respectively) for some $x,y\in\Sigma^*$. By $\Fact(w)$ we denote
the set of all factors of $w$. The $i^{th}$ letter of the word $w$ is denoted by $w[i]$. The factor $w[i]w[i+1]\cdots w[j]$
is denoted by $w[i..j]$. By $\Sigma^n$ ($\Sigma^{\le n}$, $\Sigma^{\ge n}$) we denote the set of all words over $\Sigma$ of length $n$
(at most $n$, at least $n$ respectively).

Note, that if a word $s\in S$ is a factor of some other word $t\in S$,
then the word $t$ may be deleted from the set $S$ without affecting
the ideal language, generated by $S$. Thus, we may assume, that the set
$S$ is \emph{anti-factorial}, i.e.\ no word in $S$ is a factor of another
word in $S$.

\subsection{Ideal language generated by $\Sigma^n$}

\begin{theorem}
Let $\Sigma = \{a,b\}$.
% and $L=\Sigma^{\ge n}$.
There is unique up to isomorphism strongly connected synchronizing automaton $\mathrsfs{B}$
such that $\Syn(\mathrsfs{B}) = \Sigma^{\ge n}$.
%Let $L=\Sigma^{\ge n}$, $|\Sigma|>1$. Then $\rcs(L)=2^n$.
\end{theorem}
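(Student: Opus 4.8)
The plan is to produce the De Bruijn automaton as the witness $\mathrsfs{B}$ and then show that it is forced: every strongly connected synchronizing automaton with language of reset words $\Sigma^{\ge n}$ must be isomorphic to it.

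For existence I would take $\mathrsfs{B}=\langle \Sigma^n,\Sigma,\delta\rangle$ with $\delta(u,c)=u[2..n]c$, i.e.\ the shift-and-append action on words of length $n$ (the De Bruijn graph of order $n$). Reading a word $w$ from a state $u$ ends in the suffix of $uw$ of length $n$. Hence any $w$ with $|w|\ge n$ resets $\mathrsfs{B}$ to the length-$n$ suffix of $w$ regardless of the starting state, whereas from two states differing in their last letter a word $w$ of length $n-1$ leads to the distinct states $aw$ and $bw$. Thus $\Syn(\mathrsfs{B})=\Sigma^{\ge n}$, and $\mathrsfs{B}$ is strongly connected because reading any $v\in\Sigma^n$ reaches the state $v$ from everywhere.

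For uniqueness, let $\mathscr{A}=\langle Q,\Sigma,\delta\rangle$ be an arbitrary strongly connected synchronizing automaton with $\Syn(\mathscr{A})=\Sigma^{\ge n}$. Each $u\in\Sigma^n$ is a reset word, so $Q\dt u$ is a single state $r(u)$. A short computation, $\{r(u)\}\dt c=Q\dt uc\subseteq Q\dt(u[2..n]c)=\{r(u[2..n]c)\}$, gives $r(u)\dt c=r(u[2..n]c)$; that is, $r\colon\Sigma^n\to Q$ is an automaton morphism from $\mathrsfs{B}$ to $\mathscr{A}$. Its image is closed under the action of $\Sigma$, so strong connectivity of $\mathscr{A}$ forces $r$ to be onto, and therefore $\mathscr{A}\cong\mathrsfs{B}/\theta$ for the induced congruence $\theta=\ker r$; in particular $|Q|\le 2^n$.

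It remains to see that $\theta$ is trivial. Since for $|w|=n-1$ one has $Q\dt w=\{r(aw),r(bw)\}$, such a $w$ resets $\mathscr{A}$ exactly when $aw\equiv_\theta bw$; as no word of length $n-1$ is a reset word, no pair of the form $(aw,bw)$ lies in $\theta$. The crux, and the step I expect to be the main obstacle, is the converse combinatorial fact: any non-trivial congruence on $\mathrsfs{B}$ must contain such a pair. I would prove it by picking distinct $\theta$-equivalent words $s,t$, letting $j$ be the largest coordinate where they disagree, and applying a single letter: this shifts the words left, drops coordinate $1$, and moves the last disagreement from position $j$ to $j-1$ while keeping the images $\theta$-equivalent and distinct. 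After $j-1$ steps the only remaining disagreement sits at position $1$, producing a pair $(aw,bw)\in\theta$ --- a contradiction. Hence $\theta$ is trivial, $r$ is an isomorphism, and $\mathrsfs{B}$ is the unique such automaton.
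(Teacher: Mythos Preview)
Your proof is correct, and it takes a genuinely different route from the paper's.

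For existence both you and the paper use the De Bruijn automaton. For uniqueness, however, the paper argues by counting states: after observing $|Q|\le 2^n$ from $Q=\bigcup_{|w|=n}Q\dt w$, it proves the reverse inequality $|Q|\ge 2^n$ by a fairly involved structural analysis of the action of the letter $a$. It shows that the $a$-transitions form a tree of height exactly $n$ rooted at the $a$-fixed point, that the leaves of this tree are precisely $Q\dt b$ (so $Q\dt a\cap Q\dt b=\varnothing$), and that every non-leaf has at least two $a$-preimages; summing $1+1+2+\cdots+2^{n-1}$ yields $2^n$. Only then does it conclude that the map $w\mapsto Q\dt w$ is a bijection and hence an isomorphism.

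Your argument replaces all of this with a clean quotient argument: the map $r(u)=Q\dt u$ is a surjective morphism $\mathrsfs{B}\to\mathscr{A}$, so $\mathscr{A}\cong\mathrsfs{B}/\theta$; and a non-trivial $\theta$ would, after shifting, identify some pair $aw,bw$, forcing the length-$(n{-}1)$ word $w$ to be synchronizing. This is shorter, avoids the tree analysis entirely, and makes transparent \emph{why} uniqueness holds: $\mathrsfs{B}$ has no proper congruence compatible with the constraint $\Syn=\Sigma^{\ge n}$. The paper's approach, by contrast, extracts more concrete structural information about $\mathscr{A}$ along the way (the shape of the $a$-tree, the disjointness of $Q\dt a$ and $Q\dt b$), which is not needed for the theorem but is of independent interest. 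One small point you might make explicit: the equality $Q\dt w=\{r(aw),r(bw)\}$ for $|w|=n-1$ uses $Q=Q\dt a\cup Q\dt b$, which follows from strong connectivity.
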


\begin{proof}
%Assume first $\Sigma=\{a,b\}$.
%Fix a word $w$ over $\Sigma=\{a,b\}$. Let $|w|=n$. Denote the $i$-th letter of $w$ by $w[i]$ and the subword $w[i]w[2]...w[j]$ by $w[i..j]$.
%Consider De Bruijn automaton for the words of length $n$, i.e.\ automaton $\mathscr{B}=\right<\Sigma^n,\Sigma,\delta\left>$, where $\delta(xu,y)=uy$
%for $u\in\Sigma^{n-1}$, $x,y\in\Sigma$.
Consider De Bruijn graph for the words of length $n$. Recall that the vertices of this graph are the words of length $n$, and there is a directed edge from the vertex $u$ to the vertex $v$, if $u=xs$ and $v=sy$ for some $s\in\Sigma^{n-1}$, $x,y\in\Sigma$. By labeling each edge $e=(u,v)$ by the last letter of $v$
we obtain De Bruijn automaton. Its state set is $Q=\Sigma^n$, and transition function is defined in the following way: $xs\dt y=sy$ for $s\in\Sigma^{n-1}$, $x,y\in\Sigma$.
%one classical construction from the theory of formal languages which belongs to De Bruijn. Take the set $\Sigma^n$ including all words over $\Sigma$ of length $n$. We refer to the set $\Sigma^n$ as to the state set $Q$ of a DFA $\mathscr{B}$. The transition functions is defined as follows: $\delta(w,x)=w[2..n]x$, where $x\in \Sigma$.
%The corresponding construction for $n=3$ is shown on the Fig.\ref{DeBr3}.
%\begin{figure}[ht]
%\begin{center}
%  \begin{picture}(60,48)
%   \gasset{Nadjust=w,Nadjustdist=1.5,Nh=6,Nmr=3}
%   \thinlines
%   \node[Nw=9](aaa)(0,35){$aaa$}
%   \node[Nw=9](aab)(20,50){$aab$}
%   \node[Nw=9](aba)(20,35){$aba$}
%   \node[Nw=9](abb)(40,50){$abb$}
%   \node[Nw=9](baa)(0,0){$baa$}
%   \node[Nw=9](bab)(20,20){$bab$}
%   \node[Nw=9](bba)(20,0){$bba$}
%   \node[Nw=9](bbb)(40,20){$bbb$}
%   \drawloop[loopdiam=6,loopangle=90](aaa){$a$}
%   \drawloop[loopdiam=6,loopangle=0](bbb){$b$}
%   \drawedge(aaa,aab){$b$}
%   \drawedge(aab,aba){$a$}
%   \drawedge(aab,abb){$b$}
%   \drawedge[curvedepth=-2,sxo=-2](aba,baa){$a$}
%   \drawedge(aba,bab){$b$}
%   \drawedge[curvedepth=5,sxo=-4](baa,aaa){$a$}
%   \drawedge[curvedepth=3,sxo=-2](baa,aab){$b$}
%  \drawedge[sxo=1](abb,bba){$a$}
%   \drawedge[curvedepth=5,sxo=4](abb,bbb){$b$}
%   \drawedge[curvedepth=3](bab,aba){$a$}
%   \drawedge(bab,abb){$b$}
%   \drawedge(bbb,bba){$a$}
%   \drawedge(bba,baa){$a$}
%   \drawedge(bba,bab){$b$}
%   \end{picture}
%\end{center}
%\caption{De Bruijn automaton for $n=3$}
%\label{DeBr3}
%\end{figure}
De Bruijn automaton is known to be strongly connected. Thus it remains to verify that $\Syn(\mathscr{B})=L$.
It is easy to see that for an arbitrary word $u$ of length at most $n$ we have $Q\dt u=\Sigma^{n-|u|}u$.
Hence for any word $w$ of length $n$ we have $|Q\dt w|=1$, and for any word $u$ of length less than $n$ we have $|Q\dt u|>1$.
So, $\Syn(\mathrsfs{B})=L.$
%It can be easily seen that each word of length $n$ synchronizes $\mathscr{B}$. Indeed, applying a letter $x$ to an arbitrary state results in the
%state whose last letter is $x$. In other words, we have $Q\dt x=\Sigma^{n-1}x$. Thus, by induction we have $Q\dt u =\Sigma^{n-|u|}u$ for an arbitrary word $u$ such that $|u|\le n$. Hence $|Q\dt w|=1$ for any word $w\in\Sigma^n$.
% Now it remains to check that there is no synchronizing words of length less than $n$. %Consider subsets $Q\dt a=\Sigma^{n-1}a$ and $Q\dt b=\Sigma^{n-1}b$.
 %These subsets include all words of length $n$ that end up with $a$ and $b$ respectively.
% We have $|Q\dt a|=|Q\dt b|=2^{n-1}$ and $Q=Q\dt a \cup Q\dt b$. Take $u\dt \Sigma^k$, where $k<n$. By induction we obtain that $Q\dt u$ is the set of all words of length $n$ whose suffix is $u$. Thus $|Q\dt u|=2^{n-|u|}$ and $Q\dt u\neq Q\dt v$ if $u\neq v$. So $Q=\bigcup\limits_{u\in \Sigma^k} Q\dt u$. Hence there is no synchronizing words of length less than $n$.

%Further we prove that $rcs(L)=2^n$. We provide a new technique that will be used in the present paper later. Again we explain the technique for binary alphabet, that is $\Sigma=\{a,b\}$. In general case all arguments are similar.
%Now we prove that $\rcs(L)=2^n$. Actually we prove even more, namely, that De Bruijn automaton is unique up to isomorphism
%strongly connected automaton whose language of synchronizing words coincides with $L$.
Let $\mathscr{C}=\langle Q,\Sigma,\delta\rangle$ be a strongly connected synchronizing DFA such that $\Syn(\mathscr{C})=L$. Let us prove that $|Q|\le 2^n$. Strong connectivity implies $Q\dt a\cup Q\dt b=Q$.
By induction it is easy to see that $Q=\bigcup_{|w|=k}Q\dt w$. In particular, we have
$Q=\bigcup_{|w|=n}Q\dt w$. Thus, $|Q|=|\bigcup_{|w|=n}Q\dt w|\le \sum_{|w|=n}|Q\dt w|=2^n$.
The last equality follows from the fact that every word of length $n$ synchronizes $\mathrsfs{C}$, so each $Q\dt w$ is a singleton.
For the converse inequality $2^n\le|Q|$ consider the DFA $\mathscr{C}_a$,
%Consider the action of the letter $a$ on the state set $Q$.
%Denote by $\mathscr{C}_a$ the DFA
obtained from $\mathscr{C}$ by removing all transitions corresponding to the action of $b$ in $\mathscr{C}$. The word $a^n$ synchronizes $\mathscr{C}$, so $\mathscr{C}_a$ contains no cycles but unique loop. %Otherwise we have $p\dt a^n\neq q\dt a^n$ for two different states $p$ and $q$ that appeared in some cycle in $\mathscr{C}_a$. Let $Q\dt a^n={s}$, that is $p\dt a^n=s$ for all $p\in Q$. Note that any word $a^{l}$ with $l<n$ is not synchronizing for $\mathscr{C}$. Then there exists a state $p_1\in Q$ such that $p_1\dt a^n=s$ and $p_1\dt a^l \neq s$ for all $l<n$. Finally, $s\dt a=s$. Otherwise a cycle marked by $a$ appears in $\mathscr{C}_a$.
So the automaton $\mathscr{C}_a$ has a tree-like structure as it is shown on Fig.\ref{Ca}.
Denote by $s$ the state of $\mathrsfs{C}$ such that $s\dt a=s$. The state $s$ is called \emph{root}
of the tree, and the states $p_1$, $p_2$, $\ldots,$ $p_k$ having no incoming transitions labeled by $a$ are called \emph{leaves}
of the tree. The \emph{height} $h(p_i)$ of a vertex $p_i$ is the length of the path from $p_i$ to the root $s$.
The height of the tree $h(\mathrsfs{C}_a)$ is the maximal height of its leaves. We have $h(\mathrsfs{C}_a)=n$. Indeed,
if $h(\mathrsfs{C}_a)=h<n$, then we would have $Q\dt a^h=\{s\}$, meaning that $a^h\in\Syn(\mathrsfs{C})$, which
is impossible.
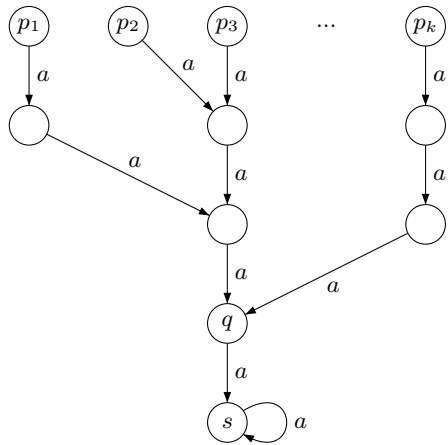
\begin{figure}[ht]
\begin{center}
\unitlength=2.5pt
  \begin{picture}(60,60)(0,0)
   \gasset{Nw=6,Nh=6,Nmr=3}
   \thinlines
   \node(p1)(0,60){$p_1$}
   \node(p2)(15,60){$p_2$}
   \node(p3)(30,60){$p_3$}
   \node(pk)(60,60){$p_k$}
   \node[Nframe=n](pn)(45,60){$...$}
   \node(s)(30,0){$s$}
   \node(q1)(0,45){}
   \node(q3)(30,45){}
   \node(qk)(60,45){}
   \node(t3)(30,30){}
   \node(tk)(60,30){}
   \node(v3)(30,15){$q$}
   \drawloop[loopdiam=6,loopangle=0](s){$a$}
   \drawedge(p1,q1){$a$}
   \drawedge(p2,q3){$a$}
   \drawedge(p3,q3){$a$}
   \drawedge(pk,qk){$a$}
   \drawedge(q3,t3){$a$}
   \drawedge(qk,tk){$a$}
   \drawedge(q1,t3){$a$}
   \drawedge(t3,v3){$a$}
   \drawedge(tk,v3){$a$}
   \drawedge(v3,s){$a$}
   \end{picture}
\end{center}
\caption{The action of $a$ in $\mathscr{C}$}
\label{Ca}
\end{figure}

Consider the set of leaves $H=Q\setminus{Q\dt a}=\{p_1,p_2,...,p_k\}$.
Since the DFA $\mathscr{C}$ is strongly connected, for each state $p_\ell$ in $H$ there exists a state $q_\ell$ such that $q_\ell\dt b=p_\ell$. Thus $H\subseteq Q\dt b$. We show that $H$ is exactly $Q\dt b$, meaning that  $Q\dt a\cap Q\dt b=\varnothing$. Take a leaf of height $n$. Without loss of generality suppose it is $p_1$. Let $q_1$ be such that $q_1\dt b=p_1$. The word $ba^{n-1}$ is synchronizing, so $Q\dt ba^{n-1}=\{q\}$ for some $q\in Q$. We have $q_1\dt ba^{n-1}=q$, and
 $q\dt a=s$ (see Fig.\ref{Ca}). Suppose there is $\overline{p}\in Q\dt a\cap Q\dt b.$ Then there is
a state $\overline{q}$ such that $\overline{q}\dt b=\overline{p}$. Since $\overline{p}$ is not a leaf,
we have $h(\overline{p})<n$. %Thus $\overline{p}\dt $Now we verify that $H = Q\dt b$. Arguing by contradiction assume that $H \neq Q\dt b$. It means that there exists a state $\overline{q}$ and a state $\overline{p}\in Q\dt b \setminus H$ with the property $\overline{q}\dt b=\overline{p}$. Note that $\overline{p}\dt a^l=s$ for some $l<n$.
Then $\overline{q}\dt ba^{n-1}=\overline{p}\dt a^{n-1}=s\neq q$. A contradiction. Hence $H = Q\dt b$. Furthermore, the height of any leaf of $\mathscr{C}_a$ is exactly $n$. To see this assume that there exists a state $p_m$ such that $h(p_m)<n$, i.e.\ $p_m\dt a^\ell=s$, for some $\ell<n$. Then the word $ba^{n-1}$ is not synchronizing. Indeed, take a state $q_m$ such that $q_m\dt b=p_m$. We have $q_m\dt ba^{n-1}=p_m\dt a^{n-1}=s\neq q$.

Consider an arbitrary state $p\in Q\dt a$. Let $\delta^{-1}(p,u)=\{p'\in Q\mid p'\dt u=p\}$. We prove that $|\delta^{-1}(p,a)|\geq 2$ for each $p\in Q\dt a$. For the root $s$ we have $\{s,q\}\subseteq \delta^{-1}(s,a)$, thus, $|\delta^{-1}(s,a)|\ge 2$.
%Note, that the state $q$ is the only state (except $s$) that maps to $s$ under the action of $a$. Indeed, let $\overline{q}\dt a=s$ for some $\overline{q}\neq q$. Find a leaf $p_m$ such that $p_m\dt a^{n-1}=\overline{q}$. Then $\{q,\overline{q}\} \subseteq Q\dt ba^{n-1}$. But  $Q\dt ba^{n-1}=q$ by the definition of $q$. Contradiction. So $\mid \delta^{-1}(s,a)\mid=2$.
%Since the word $ba^{n-2}$ does not synchronize $\mathscr{C}$, we have $|Q\dt ba^{n-2}|\geq 2$. But $(Q\dt ba^{n-2})\dt a=\{q\}$, therefore $|\delta^{-1}(q,a)|\geq 2$.
Let $p$ be an arbitrary state in $Q\dt a$. Strong connectivity of $\mathrsfs{C}$ implies that
there exists a state $\overline{p}$ and a word $w\in \Sigma^n$ such that $\overline{p}\dt w=p$. Since $w$ is synchronizing, we have $Q\dt w=\{p\}$. Consider the word $w[1..n-1]$ that does not synchronize $\mathrsfs{C}$. Then $|Q\dt w[1..n-1]|\geq 2$. However, $(Q\dt w[1..n-1])\dt w[n]=p$. And we obtain the inequality $|\delta^{-1}(p,a)|\geq 2$.
Denote $H_0=\{q\}$ and construct sets $H_{i}=\delta^{-1}(H_{i-1},a)$ for $1\leq i \leq n-1$. We have $|H_i|\geq 2^i$ for all $1\leq i\leq n-1$. Then $\mathscr{C}$ possesses at least $1+1+2+4+...+2^{n-1}=2^n$ states.

Thus we have $|Q|=2^n$. Moreover, $Q=\cup_{|w|=n}Q\dt w$. It means that with each state $q$ of $Q$ we can associate the word $w$
of length $n$ such that $Q\dt w=\{q\}$. It is clear that it gives us the desired isomorphism between $\mathrsfs{C}$ and $\mathrsfs{B}$.

\qed

\end{proof}

\begin{remark} In case $\Sigma=\{a,b\}\cup\Delta$, where $\Delta\ne\varnothing$, we consider De Bruijn automaton constructed for the binary alphabet $\{a,b\}$
and put the action of each letter in $\Delta$ to be the same as the action of the letter $a$. It is clear that the language of synchronizing words
of the modified De Bruijn automaton coincides with $\Sigma^{\ge n}$.
\end{remark}

The Proposition implies that the minimal DFA recognizing an ideal language $L$ can be exponentially smaller than a strongly connected MSA $\mathscr{B}$ with $Syn(\mathscr{B})=L$.

\subsection{Ideal language generated by a set of words of fixed length}

\begin{theorem}
Let $U\subsetneq \Sigma^n$. There is a strongly connected synchronizing automaton $\mathrsfs{B}_U$ with $2^n$ states such that
$\Syn(\mathrsfs{B}_U)=\Sigma^*U\Sigma^*$.
\end{theorem}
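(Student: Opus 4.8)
The plan is to realize $\mathrsfs{B}_U$ as a twisted doubling of the De Bruijn automaton on $\Sigma^{n-1}$, so that it has exactly $2\cdot 2^{n-1}=2^{n}$ states. I take $Q=\Sigma^{n-1}\times\Sigma$; I read the first coordinate $s=s_1\cdots s_{n-1}$ as a window holding the last $n-1$ letters and the second coordinate as a one-letter tag. Writing $\sigma(s,y)=s_2\cdots s_{n-1}y$ for the shifted window, $sy$ for the length-$n$ word just completed, and $\bar{\imath}$ for the letter of $\Sigma=\{a,b\}$ distinct from $i$, I set
\[
\delta\bigl((s,i),y\bigr)=
\begin{cases}
\bigl(\sigma(s,y),\,s_1\bigr)&\text{if }sy\in U,\\
\bigl(\sigma(s,y),\,\bar{\imath}\bigr)&\text{if }sy\notin U.
\end{cases}
\]
The window always moves as in the De Bruijn automaton on $\Sigma^{n-1}$ (which is strongly connected), while the tag is flipped on every transition that does \emph{not} complete a word of $U$ and is reset to the genuine De Bruijn value $s_1$ on every transition that \emph{does}. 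I assume $U\neq\varnothing$, which is forced because a synchronizing automaton has reset words.

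First I would settle the easy inclusion. Every $w=w_1\cdots w_n\in U$ is a reset word: reading $w$ from any state, the window is input-driven and equals $w_1\cdots w_{n-1}$ after $n-1$ steps, and the final step completes $w\in U$ and lands in $(w_2\cdots w_n,\,w_1)$ regardless of the start; hence $w$ induces a constant map. Since $\Syn(\mathrsfs{B}_U)$ is always an ideal, this gives $\Sigma^{*}U\Sigma^{*}\subseteq\Syn(\mathrsfs{B}_U)$ and shows $\mathrsfs{B}_U$ is synchronizing.

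The heart of the argument is the reverse inclusion: no word $v$ avoiding every factor in $U$ resets $\mathrsfs{B}_U$. If $|v|<n-1$ the final window still depends on the start, so $Q\dt v$ meets several windows and $v$ is not synchronizing; otherwise after $n-1$ letters all states share the window $w_v$ (the last $n-1$ letters of $v$), so $Q\dt v$ sits inside the single fibre $\{(w_v,a),(w_v,b)\}$, and ``$v$ is not synchronizing'' is exactly ``both tags occur in $Q\dt v$''. I would track the tag along a run: every step that completes an actual factor of $v$ lies past the starting window and hence is non-$U$, so it only flips the tag; a $U$-completing step can occur solely within the first $n-1$ letters, where the relevant length-$n$ word straddles the start state, and it resets the tag to the first letter of that straddling word. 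The delicate point, which I expect to be the main obstacle, is to show both tags are realized: if for some starting window $v$ is read without ever completing a word of $U$, the tag is merely flipped $|v|$ times and both initial tags survive; otherwise one must compare the last early $U$-completions of suitably chosen starting windows and check (watching the parity of the flips that follow) that they fix different tags. This step genuinely relies on the reset value being the De Bruijn letter $s_1$, since resetting to a fixed tag would make some $U$-avoiding words synchronize.

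It remains to check strong connectivity and the state count. From any state one reaches a common state by reading a fixed $w\in U$ (a constant map), which gives one direction. For the other, the window coordinate ranges over the strongly connected De Bruijn automaton on $\Sigma^{n-1}$, so every window is reachable, and a target $(t,j)$ is reached as follows: if $jt\in U$ then reading the length-$n$ word $jt$ lands in $(t,j)$ from anywhere, while if $jt\notin U$ then $(t,j)$ is entered by a tag-flipping transition from the predecessor $(j\,t_1\cdots t_{n-2},\,\bar{\jmath})$, and one propagates reachability backwards along such flip-edges. Hence $\mathrsfs{B}_U$ is strongly connected, has $|Q|=2^{n}$ states, and satisfies $\Syn(\mathrsfs{B}_U)=\Sigma^{*}U\Sigma^{*}$, as required.
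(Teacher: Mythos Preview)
Your construction is genuinely different from the paper's, and it has a real defect: it is \emph{not} always strongly connected. Take $n=2$ and $U=\{aa,bb\}$. Then $ab,ba\notin U$, so every transition out of a state with window $a$ by letter $b$, or window $b$ by letter $a$, is a flip, while the $aa$- and $bb$-transitions reset the tag to the first window letter. Writing out the eight transitions one finds
\[
(a,a)\xrightarrow{a}(a,a),\ (a,a)\xrightarrow{b}(b,b),\ (b,b)\xrightarrow{a}(a,a),\ (b,b)\xrightarrow{b}(b,b),
\]
so $\{(a,a),(b,b)\}$ is a closed invariant set and the states $(a,b),(b,a)$ are unreachable from it. Your backward-propagation argument for reachability loops here: to reach $(a,b)$ with $ba\notin U$ you pass to the flip-predecessor $(b,a)$; to reach $(b,a)$ with $ab\notin U$ you pass to $(a,b)$ again. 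Nothing forces this chain to ever meet a word of $U$, and in this example it never does.

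The paper avoids this by making the opposite choice on non-$U$ transitions: the tag is \emph{kept} (not flipped), with a special flip only at the two fixed points $a^n,b^n$ when those lie outside $U$. Keeping the tag makes strong connectivity easy (one walks inside the $a$-tagged copy of the De~Bruijn graph to any window, then crosses once to the $b$-copy), and it also makes the hard direction of $\Syn(\mathrsfs{B}_U)=\Sigma^*U\Sigma^*$ immediate: for any $w\in\Sigma^n\setminus U$ one gets $Q\dt w=\{(a,v),(b,v)\}$ in one line, with no parity bookkeeping. Your ``delicate point'' about tracking last resets and parities is a symptom of the flip rule; with the paper's keep rule that difficulty simply disappears. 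So the fix is not to patch the reachability argument but to change the tag rule on non-$U$ transitions.
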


\begin{proof}We modify the De Bruijn automaton $\mathrsfs{B}$ from the section 2.1 to
obtain the desired automaton $\mathrsfs{B}_U$. First of all it is convenient to
view the states of the automaton $\mathrsfs{B}$ not as the words of length $n$,
but as pairs $(x,u)$, where $x\in\Sigma$ and $u\in\Sigma^{n-1}$. Then by the
definition of the transitions in $\mathrsfs{B}$ we have%there is a transition labeled $y\in\Sigma$
%from the state $(x,u)$ to the state $(z,v)$ if and only if $uy=zv$
\begin{equation}\label{B_trans}(x,u)\xrightarrow{y}(z,v) \Leftrightarrow uy=zv\end{equation}
For a word $uy$ which is not in $U$, we modify the corresponding transition
given by \eqref{B_trans} in the following way.
%For a state of the form $(x,u)$ and any letter $y\in\Sigma$ such that
%the word $uy$ is synchronizing, the transition remains unchanged,
%i.e.\ $(x,u)\dt y=(z,v)$, where $z\in\Sigma$ and $v\in\Sigma^{n-1}$
%are such that $uy=zv$.
%Let $u\notin U\cup\{a^n,b^n\}$, $x\in\Sigma$.
If $uy\notin U\cup\{a^n,b^n\}$ we put \begin{equation}\label{B_U_trans}(x,u)\xrightarrow{y}(x,v),\end{equation} where $v$ is
defined by \eqref{B_trans}.

If $uy=a^n\notin U$ ($uy=b^n\notin U$ respectively) we put
\begin{equation}\label{B_U_ab_trans}(a,a^{n-1})\xrightarrow{a}(b,a^{n-1}),\quad ((b,b^{n-1})\xrightarrow{b}(a,b^{n-1}) \text{ respectively}).\end{equation}
%If $uy=a^n\notin U$ we put
%\begin{equation}\label{B_U_a_trans}(a,a^{n-1})\xrightarrow{a}(b,a^{n-1}) \text{ and } (b,a^{n-1})\xrightarrow{a}(a,a^{n-1}).\end{equation}
%
%If $uy=b^n\notin U$ we put
%\begin{equation}\label{B_U_b_trans}(b,b^{n-1})\xrightarrow{b}(a,b^{n-1}) \text{ and } (a,b^{n-1})\xrightarrow{b}(b,b^{n-1}).\end{equation}
\begin{figure}[ht]
\begin{center}
  \begin{picture}(60,48)
   \gasset{Nadjust=w,Nadjustdist=1.5,Nh=6,Nmr=3}
   \thinlines
   \node[Nw=9](aaa)(0,35){$(a,aa)$}
   \node[Nw=9](aab)(20,50){$(a,ab)$}
   \node[Nw=9](aba)(20,35){$(a,ba)$}
   \node[Nw=9](abb)(40,50){$(a,bb)$}
   \node[Nw=9](baa)(0,0){$(b,aa)$}
   \node[Nw=9](bab)(20,20){$(b,ab)$}
   \node[Nw=9](bba)(20,0){$(b,ba)$}
   \node[Nw=9](bbb)(40,20){$(b,bb)$}
   \drawloop[loopdiam=6,loopangle=90](aaa){$a$}
   \drawloop[loopdiam=6,loopangle=0](bbb){$b$}
   \drawedge(aaa,aab){$b$}
   \drawedge(aab,aba){$a$}
   \drawedge(aab,abb){$b$}
   \drawedge[curvedepth=-2,sxo=-2](aba,baa){$a$}
   \drawedge(aba,bab){$b$}
   \drawedge[curvedepth=5,sxo=-4](baa,aaa){$a$}
   \drawedge[curvedepth=3,sxo=-2](baa,aab){$b$}
  \drawedge[sxo=1](abb,bba){$a$}
   \drawedge[curvedepth=5,sxo=4](abb,bbb){$b$}
   \drawedge[curvedepth=3](bab,aba){$a$}
   \drawedge(bab,abb){$b$}
   \drawedge(bbb,bba){$a$}
   \drawedge(bba,baa){$a$}
   \drawedge(bba,bab){$b$}
   \end{picture}
\end{center}
\caption{De Bruijn automaton for $n=3$}
\label{DeBr3_1}
\end{figure}
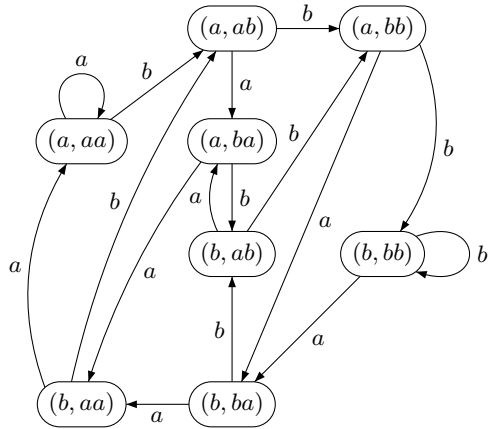
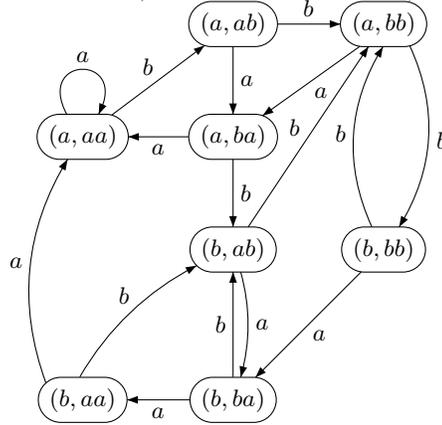
\begin{figure}[ht]
\begin{center}
  \begin{picture}(60,48)
   \gasset{Nadjust=w,Nadjustdist=1.5,Nh=6,Nmr=3}
   \thinlines
   \node[Nw=9](aaa)(0,35){$(a,aa)$}
   \node[Nw=9](aab)(20,50){$(a,ab)$}
   \node[Nw=9](aba)(20,35){$(a,ba)$}
   \node[Nw=9](abb)(40,50){$(a,bb)$}
   \node[Nw=9](baa)(0,0){$(b,aa)$}
   \node[Nw=9](bab)(20,20){$(b,ab)$}
   \node[Nw=9](bba)(20,0){$(b,ba)$}
   \node[Nw=9](bbb)(40,20){$(b,bb)$}
   \drawloop[loopdiam=6,loopangle=90](aaa){$a$}
%   \drawloop[loopdiam=6,loopangle=0](bbb){$b$}
   \drawedge(aaa,aab){$b$}
   \drawedge(aab,aba){$a$}
   \drawedge(aab,abb){$b$}
   \drawedge(aba,aaa){$a$}
   \drawedge(aba,bab){$b$}
   \drawedge[curvedepth=5,sxo=-4](baa,aaa){$a$}
   \drawedge[curvedepth=3,sxo=-2](baa,bab){$b$}
  \drawedge[sxo=1](abb,aba){$a$}
   \drawedge[curvedepth=5,sxo=2](abb,bbb){$b$}
  \drawedge[curvedepth=5,exo=2](bbb,abb){$b$}
   \drawedge[curvedepth=2](bab,bba){$a$}
   \drawedge(bab,abb){$b$}
   \drawedge(bbb,bba){$a$}
   \drawedge(bba,baa){$a$}
   \drawedge(bba,bab){$b$}
   \end{picture}
\end{center}
\caption{Automaton $\mathrsfs{B}_U$ for $U=\{aaa, abb, bab\}$}
\label{DeBr3_2}
\end{figure}
The other transitions remain unchanged. The obtained automaton is denoted by $\mathrsfs{B}_U$.
The examples of the automaton $\mathrsfs{B}$ and the corresponding modified automaton $\mathrsfs{B}_U$ for
$U=\{aaa,abb,bab\}$ are shown on Fig.\ref{DeBr3_1} and Fig.\ref{DeBr3_2} respectively.
We prove that the automaton $\mathrsfs{B}_U$ satisfies the statement of the proposition.
First we show that $\mathrsfs{B}_U$ is strongly connected.
For this purpose we prove that all the states are reachable from the state $(a,a^{n-1})$,
and the state $(a,a^{n-1})$ is reachable from all states.

First we show that a state $(a,u)$ is reachable from $(a,a^{n-1})$ for any $u\in\Sigma^{n-1}$.
If  $u=a^{n-1}$, the claim obviously holds. Hence we may assume $u=a^kb\hat{u}$, where $k\ge0$,
$\hat{u}\in\Sigma^{n-k-2}$. By the definition of transitions in $\mathrsfs{B}_U$ we have
$$(a,a^{n-1})\xrightarrow{b}(a,a^{n-2}b)\xrightarrow{\hat{u}[1]}(a,a^{n-3}b\hat{u}[1])\xrightarrow{\hat{u}[2]}\cdots\xrightarrow{\hat{u}[n-k-3]}$$
$$(a,a^{k+1}b\hat{u}[1..n-k-3])\xrightarrow{\hat{u}[n-k-2]}(a,a^kb\hat{u}[1..n-k-2])=(a,u).$$
Symmetrically any state $(b,u)$ is reachable from the state $(b,b^{n-1})$. The latter state is reachable
from $(a,b^{n-1})$. Thus the state $(b,u)$ is reachable also from $(a,a^{n-1})$:
$$(a,a^{n-1})\rightsquigarrow (a,b^{n-1})\xrightarrow{b}(b,b^{n-1})\rightsquigarrow (b,u).$$

Now we show that the state $(a,a^{n-1})$ is reachable from any other state. Apply the word $a^{n-1}$ to an
arbitrary state $(x,u)$. By the definition of transitions we have $(x,u)\dt a^{n-1}\in \{(a,a^{n-1}),(b,a^{n-1})\}.$
If $(x,u)\dt a^{n-1}=(a,a^{n-1})$ we are done. If $(x,u)\dt a^{n-1}=(b,a^{n-1})$, then we apply once more the letter $a$ and obtain
$(x,u)\dt a^{n}=(a,a^{n-1})$.

Thus the constructed automaton $\mathrsfs{B}_U$ is strongly connected.
Next we show that $\Syn(\mathrsfs{B}_U)=\Sigma^*U\Sigma^*.$ It is easy to see that for any word $u\in\Sigma^{n-1}$
we have $Q\dt u\subseteq\{(a,u),(b,u)\}$, and $Q\dt u\cap Q\dt v=\varnothing$ for $u,v\in\Sigma^{n-1}$ such that $u\ne v$. Thus $Q\supseteq\bigcup\limits_{|u|=n-1}Q\dt u$.
Next we check that $Q=\bigcup\limits_{|u|=n-1}Q\dt u$. Indeed, if $a^n\in U$ we have $(a,a^{n-1})\xrightarrow{u}(a,u)$ for all $u\in\Sigma^{n-1}$. If $a^n\not\in U$ take any word $u\in\Sigma^{n-1}$. If $u=a^{n-1}$ then $u$ maps the state $(a,a^{n-1})$ or the state $(b,a^{n-1})$ to $(a,u)$. Let us assume now that $u=a^kb\hat{u}$. If $k$ is even (odd, respectively) then $u$ maps $(a,a^{n-1})$ ($(b,a^{n-1})$, respectively) to $(a,u)$. So any states $(a,u)$ belongs to the set $\bigcup\limits_{|u|=n-1}Q\dt u$. Symmetrically any states $(b,u)$ belongs to the latter set. Hence $Q=\bigcup\limits_{|u|=n-1}Q\dt u$.
Since $|Q|=2^n$, if there is a synchronizing word $u$ of length $n-1$, we would have
$2^n=|Q|=|\bigcup\limits_{|u|=n-1}Q\dt u|<2^n$, which is a contradiction. Thus, none of the words
of length $n-1$ is synchronizing. Consider an arbitrary word $w$ of length $n$
and factorize it as $w=uy$ with $u\in\Sigma^{n-1}$ and $y\in\Sigma$. We have $Q\dt u=\{(a,u),(b,u)\}.$
If $w\in U$, then the corresponding transitions from the states $(a,u)$ and $(b,u)$
were not changed, and we have $Q\dt uy=\{(z,v)\}$, where $uy=zv$, so $w$ is synchronizing.
If $w\notin U$, then $Q\dt uy=\{(a,v),(b,v)\}$, where $v$ is such that $uy=zv$ for some $z\in\Sigma$,
so $w\notin\Syn(\mathrsfs{B}_U)$.

\qed
\end{proof}

\subsection{Ideal languages generated by a finite set of words}

\begin{theorem}
Let $S$ be finite and anti-factorial set of words in $\Sigma^+$. There is a strongly
connected synchronizing automaton $\mathrsfs{C}_S$ such that $\Syn(\mathrsfs{C}_S)=\Sigma^*S\Sigma^*.$
This automaton has at most $2^n$ states, where $n=\max{\{|s|\mid s\in S\}}$.
\end{theorem}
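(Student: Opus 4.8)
The plan is to obtain $\mathrsfs{C}_S$ by a local modification of the fixed-length automaton of the previous subsection, so that the bound on the number of states comes essentially for free. First I would put $U=\{w\in\Sigma^{n}\mid \Fact(w)\cap S\neq\varnothing\}$, the set of length-$n$ words having a factor in $S$, and start from the automaton $\mathrsfs{B}_U$ (or from $\mathrsfs{B}$ when $U=\Sigma^{n}$). By the previous subsection it is strongly connected and is synchronized exactly by the words containing a factor in $U$. This already settles every input of length at least $n$: such a word has a factor in $S$ if and only if it has a factor in $U$, because any occurrence of $s\in S$ with $|s|\le n$ fits inside some length-$n$ window. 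Hence the only thing left to arrange is the \emph{short} words: every word of length less than $n$ that has a factor in $S$ must become synchronizing, while no $S$-free word may.

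The difficulty is that $\mathrsfs{B}_U$ cannot reset on a short pattern $s$ with $|s|=m<n$: after reading $s$ the automaton still remembers the $n-1-m$ letters read before $s$, so $Q\dt s$ is never a singleton. I would cure this by \emph{redirecting} the transitions that complete a pattern. Concretely, whenever the action of a letter $y$ from a state with window $u$ produces a length-$n$ word $uy$ whose suffix lies in $S$, I send the state to one fixed canonical state $c(s)$ depending only on that suffix $s$: for $|s|=n$ this is what $\mathrsfs{B}_U$ already does (namely $c(s)=s$), and for a shorter $s$ I take $c(s)$ to be $s$ padded on the left to length $n$ in a fixed way. Here anti-factoriality of $S$ is exactly what makes the rule unambiguous: since no word of $S$ is a factor of another, at each step at most one pattern can be a suffix of $uy$, and a completed pattern cannot reappear straddling the window shift. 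Finally I would discard the states that are no longer reachable.

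To finish I would verify the three required properties. For $\Syn(\mathrsfs{C}_S)\supseteq\Sigma^{*}S\Sigma^{*}$ I would follow a word up to the first position at which some $s\in S$ is completed: just before that step every state in the current image has a window ending in the proper prefix of $s$, so the completing letter sends them all to $c(s)$, and determinism keeps the image a singleton afterwards; this now works for short $s$ as well as for long ones. For the reverse inclusion I must show that reading an $S$-free word always leaves at least two states, i.e.\ that the two De Bruijn copies of a clean window are never merged; this is where I would have to check that the only redirections an $S$-free word can trigger are "phantom'' completions using the invented left-padding, and that these still preserve a surviving pair. Strong connectivity and $|Q|\le 2^{n}$ would follow because $\mathrsfs{C}_S$ lives on a subset of the state set of $\mathrsfs{B}_U$ and every retained state reaches, and is reached from, the hub $(a,a^{n-1})$ along the De Bruijn backbone, just as in the previous proof. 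The main obstacle is calibrating the redirection so that these last two checks hold simultaneously: the canonical targets must absorb every short pattern (to create the short resets) yet must neither merge the two clean copies of an $S$-free window (which, as the small examples of the previous subsection show, would create spurious resets) nor disconnect the automaton; threading this needle is exactly where anti-factoriality does the real work.
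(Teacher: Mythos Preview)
Your starting point coincides with the paper's: it too sets $T=\{w\in\Sigma^{n}\mid \Fact(w)\cap S\neq\varnothing\}$ and builds $\mathrsfs{B}_T$. The divergence is in how the short patterns are absorbed. You propose to \emph{redirect} transitions that complete a pattern $s$ to a canonical padded state $c(s)$ and then prune unreachable states. The paper instead takes a \emph{quotient}: it declares $usv\simeq u'sv$ whenever $s\in S$ and $sv$ contains no factor in $S$ other than the leading $s$ (anti-factoriality makes this canonical decomposition unique), checks that $\simeq$ is a congruence on $\mathrsfs{B}_T$, and puts $\mathrsfs{C}_S=\mathrsfs{B}_T/{\simeq}$.

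The quotient route dissolves exactly the obstacle you flag at the end. With redirection you must worry that the fake left-padding in $c(s)$ later triggers ``phantom'' completions that either merge the two clean copies of an $S$-free window or, conversely, fail to keep the image a singleton after a genuine reset; you also have to argue separately that pruning does not destroy strong connectivity. None of this arises for the quotient: a homomorphic image of a strongly connected automaton is strongly connected for free; every $s\in S$ sends every class to the single class $[s]$, so $\Sigma^*S\Sigma^*\subseteq\Syn$; and for the converse, if a word $t$ synchronizes to a class $[w]$ then either $[w]$ is a singleton (so $t$ already synchronized $\mathrsfs{B}_T$, hence $t\in\Sigma^*T\Sigma^*\subseteq\Sigma^*S\Sigma^*$) or $[w]=[sv]$ with $|sv|<n$, which forces $sv$ and hence $s$ to be a suffix of $t$. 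Your redirection is morally the same identification carried out by hand on one representative per class, but without the congruence formalism you are left to verify consistency case by case; the paper's device is precisely the clean way to ``calibrate'' what you were reaching for.
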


\begin{proof} Let $T=\{w\in\Sigma^n\mid \exists s\in S, s\in \Fact(w)\}.$ First we construct the automaton
$\mathrsfs{B}_{T}$ as described in the previous proposition.
In that proposition the states of $\mathrsfs{B}_{T}$ were viewed as pairs $(x,u)$ with $x\in\Sigma$, $u\in\Sigma^{n-1}$.
Here it will be convenient to view the states as the words $xu$ of length $n$
(as it was in the initial De Bruijn automaton). Note, that since $S$ is anti-factorial,
every state in $T$ can be uniquely
factorized as $usv$ such that $s\in S$, $u,v\in\Sigma^*$ and $sv$ does not contain
factors in $S$ except $s$. In what follows we will use this unique representation without stating it explicitly.

Next we define an equivalence relation $\simeq$ on the set
of states of this automaton (i.e.\ on words of length $n$) in the following way.
Let $w,w'\in T$. We have $w\simeq w'$ iff $w=usv$ and $w'=u'sv$, where $s\in S$,
$u,u',v\in\Sigma^*$. On the set $\Sigma^n\setminus T$ the relation $\simeq$ is defined trivially,
i.e.\ for $w,w'\in \Sigma^n\setminus T$ we have $w\simeq w'$ iff $w=w'$.
It is easy to see that $\simeq$ is indeed an equivalence relation on $\Sigma^n$. In fact,
$\simeq$ is a congruence on the set of states of the automaton $\mathrsfs{B}_T$.
Let us check that for any $x\in\Sigma$ and any $w,w'\in\Sigma^n$ $w\simeq w'$ implies
$w\dt x \simeq w'\dt x$. If $w,w'\in \Sigma^n\setminus T$, then $w=w'$ and we are done.
if $w,w'\in T$, then $w=usv$, $w'=u'sv$. If $u=u'=\varepsilon$, then
$w=w'$, and there is nothing to prove. So we may assume, that $u,u'\ne\varepsilon$. Then
$usv\dt x=tsvx$ and $u'sv\dt x= t'svx$ for some $t,t'\in\Sigma^*$. Since the obtained two words
have the same suffixes, containing a word in $S$, they are equivalent.
So we can consider the factor automaton $\mathrsfs{B}_T/\simeq$, whose states are the equivalence
classes of $\simeq$, and the transition function is induced from the initial automaton.
Let us denote by $[sv]$ the equivalence class of a word $usv\in T$, and by $[u]$ the equivalence class
of a word $u\notin T$.
We claim, that $\mathrsfs{C}_S=\mathrsfs{B}_T/\simeq$. In other words, the constructed automaton is strongly connected,
and $\Syn(\mathrsfs{B}_T/\simeq)=\Sigma^*S\Sigma^*$. The first property holds trivially, since a factor automaton
of a strongly connected automaton is strongly connected.

For any $w\in\Sigma^*$ and $s\in S$ previously in $\mathrsfs{B}_T$ we had $w\dt s=us$, where $u\in\Sigma^*$. Since $S$
is anti-factorial, in $\mathrsfs{B}_T/\simeq$ we have $[w]\dt s=[s]$, so any $s\in S$ is synchronizing for
the automaton $\mathrsfs{B}_T/\simeq$. Now let $t$ be a synchronizing word, so there is a state $[w]$
such that for any state $[w']$ we have $[w']\dt t=[w]$. If $[w]$ is a one-element class, then the word
$t$ was synchronizing for the initial automaton $\mathrsfs{B}_T$, so $t$ contains some word in $S$ as a factor,
i.e.\ $t\in \Sigma^*S\Sigma^*$. Consider the case where $[w]$ is a class consisting of elements $u_1sv,\ u_2sv,\ldots, u_ksv$, $k>1$.
Note that in this case $u_i\ne\varepsilon$ for each $i=1,\ldots,k$. This means that $t=usv$ for some $u\in\Sigma^*$,
thus, also in this case $t\in \Sigma^*S\Sigma^*$.

\qed

\end{proof}

Complete this section with an example. Let $S=\{a^2,aba\}$ and $\Sigma=\{a,b\}$. Construct the corresponding set $T=\{a^3,a^2b,ba^2,aba\}$. Next build the DFA $\mathrsfs{B}_T$. The resulting automaton is shown on the left side of Fig.\ref{DeBrTT}.
\begin{figure}[ht]
\begin{center}
  \begin{picture}(105,48)(0,3)
   \gasset{Nadjust=w,Nadjustdist=1.5,Nh=6,Nmr=3}
   \thinlines
   \node[Nw=9](aaa)(5,50){$aaa$}
   \node[Nw=9](aab)(25,50){$aab$}
   \node[Nw=9](aba)(25,35){$aba$}
   \node[Nw=9](abb)(45,50){$abb$}
   \node[Nw=9](baa)(5,20){$baa$}
   \node[Nw=9](bab)(25,20){$bab$}
   \node[Nw=9](bba)(25,0){$bba$}
   \node[Nw=9](bbb)(45,20){$bbb$}
   \drawloop[loopdiam=6,loopangle=180](aaa){$a$}
%   \drawloop[loopdiam=6,loopangle=0](bbb){$b$}
   \drawedge(aaa,aab){$b$}
   \drawedge[curvedepth=2,sxo=2](aab,aba){$a$}
   \drawedge(aab,abb){$b$}
   \drawedge[curvedepth=2,sxo=2](aba,aab){$b$}
   \drawedge(aba,baa){$a$}
   \drawedge[curvedepth=5,sxo=-4](baa,aaa){$a$}
   \drawedge[curvedepth=3,sxo=-2](baa,aab){$b$}
  \drawedge[sxo=1](abb,aba){$a$}
   \drawedge(abb,bbb){$b$}
  \drawedge[curvedepth=5,exo=2](bbb,abb){$b$}
   \drawedge[curvedepth=2](bab,aba){$a$}
   \drawedge(bab,bbb){$b$}
   \drawedge(bbb,bba){$a$}
   \drawedge(bba,baa){$a$}
   \drawedge(bba,bab){$b$}
      \node[Nw=9](aaa1)(65,35){$[aa]$}
   \node[Nw=9](aab1)(85,50){$[aab]$}
   \node[Nw=9](aba1)(85,35){$[aba]$}
   \node[Nw=9](abb1)(105,50){$[abb]$}
   %\node[Nw=9](baa1)(60,20){$baa$}
   \node[Nw=9](bab1)(85,20){$[bab]$}
   \node[Nw=9](bba1)(85,0){$[bba]$}
   \node[Nw=9](bbb1)(105,20){$[bbb]$}
   \drawloop[loopdiam=6,loopangle=180](aaa1){$a$}
%   \drawloop[loopdiam=6,loopangle=0](bbb1){$b$}
   \drawedge(aaa1,aab1){$b$}
   \drawedge[curvedepth=2,sxo=2](aab1,aba1){$a$}
   \drawedge(aab1,abb1){$b$}
   \drawedge[curvedepth=2,sxo=2](aba1,aab1){$b$}
   \drawedge(aba1,aaa1){$a$}
   %\drawedge[curvedepth=5,sxo=-4](baa1,aaa1){$a$}
   %\drawedge[curvedepth=3,sxo=-2](baa1,aab1){$b$}
  \drawedge[sxo=1](abb1,aba1){$a$}
   \drawedge(abb1,bbb1){$b$}
  \drawedge[curvedepth=5,exo=2](bbb1,abb1){$b$}
   \drawedge[curvedepth=2](bab1,aba1){$a$}
   \drawedge(bab1,bbb1){$b$}
   \drawedge(bbb1,bba1){$a$}
   \drawedge(bba1,aaa1){$a$}
   \drawedge(bba1,bab1){$b$}
   \end{picture}
\end{center}
\caption{Automata $\mathrsfs{B}_T$ and $\mathrsfs{B}_T/\simeq$ for $T=\{aaa, aab, baa,aba\}$}
\label{DeBrTT}
\end{figure}

By the definition of $\simeq$ the class  $[aa]$ includes states $aaa$ and $baa$. The rest classes are one-element. The resulting automaton $\mathrsfs{B}_T/\simeq$ is shown on the right side of Fig.\ref{DeBrTT}

\subsection{Ideal languages generated by two words}

%Let $w_1,w_2\in\Sigma^*$ and $|\Sigma|>1$. Denote $|w_1|=n$ and $|w_2|=m$. Assume that $w_1\not\in \{ab^{n-1},a^{n-1}b\}$ and $w_2\not\in \{ab^{m-1},a^{m-1}b\}$ for $a,b\in \Sigma$ and $a\neq b$. Let $S=\{w_1,w_2\}$ and $S$ is anti-factorial. Now we prove that $rcs(L)\leq n+m$ and the bound is tight.

Let $S=\{u,v\}\subseteq\Sigma^+$ and let $|u|=n$, $|v|=m$. Again we suppose that $S$ is anti-factorial. In this case we can construct a strongly connected automaton
$\mathrsfs{D}_{u,v}$ such that $\Syn(\mathrsfs{D}_{u,v})=\Sigma^*(u+v)\Sigma^*$ with $n+m$ states, thus improving construction
from the previous section.
%Let $L_{u,v}=\Sigma^*(u+v)\Sigma^*$. We assume that neither $u$ is a factor of $v$, nor $v$
% is a factor of $u$. We prove that in most cases $rcs(L_{u,v})\leq n+m$. Moreover, we show that this upper bound cannot be improved by
% exhibiting two words $u$ and $v$ such that $\rcs(L_{u,v})=|u|+|v|.$
%Assume that $u\not\in \{ab^{n-1},a^{n-1}b\}$ and $v\not\in \{ab^{m-1},a^{m-1}b\}$ for different letters $a, b\in \Sigma$.
For simplicity we state and prove the following theorem only for the case of binary alphabet, although the same argument works in general.
\begin{theorem}
Let $\Sigma=\{a,b\}$, and let $u\in\Sigma^n\setminus\{ab^{n-1},a^{n-1}b,ba^{n-1},b^{n-1}a\}$, $v\in\Sigma^m\setminus\{ab^{m-1},a^{m-1}b,ba^{m-1},b^{m-1}a\}$.
There is a strongly connected synchronizing automaton $\mathrsfs{D}_{u,v}$ having $n+m$ states such that $\Syn(\mathrsfs{D}_{u,v})=\Sigma^*(u+v)\Sigma^*$.
\end{theorem}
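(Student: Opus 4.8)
The plan is to build $\mathrsfs{D}_{u,v}$ by gluing together two cycles, one of length $n$ that carries the matching of $u$ and one of length $m$ that carries the matching of $v$, in the spirit of the De Bruijn automaton used above but using only $n+m$ states. I would label the states of the first cycle by the proper prefixes $\varepsilon,u[1],u[1..2],\ldots,u[1..n-1]$ of $u$ and those of the second cycle by the proper prefixes of $v$ (with the two ``start'' states kept distinct, so that the total is exactly $n+m$). The design rule is that at each state the ``forward'' letter advances one step along its own cycle, so that spelling out $u$ from the start of the $u$-cycle runs once around it, and symmetrically for $v$. The two transitions that would complete an occurrence of $u$ (respectively of $v$) are redirected, and, crucially, the ``deviating'' letter at each state is sent into the appropriate prefix-state of the \emph{other} cycle, where it plays the role of a failure link for the two-pattern matcher $\{u,v\}$. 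The conditions $u\notin\{ab^{n-1},a^{n-1}b,ba^{n-1},b^{n-1}a\}$ and $v\notin\{ab^{m-1},a^{m-1}b,ba^{m-1},b^{m-1}a\}$ are precisely what I expect to guarantee that both cycles are genuinely entered by these deviating transitions, so that neither cycle degenerates and the two start states remain distinguishable until a full pattern is read.

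Once the automaton is written down, I would verify the statement in the same order as in the previous theorems. First, strong connectivity: as in the reachability analysis for $\mathrsfs{B}_U$, I would single out one distinguished state (say the start of the $u$-cycle), show it is reached from an arbitrary state by reading a suitable power of a single letter followed by one crossing transition, and show conversely that every state is reached from it by spelling out the relevant prefix. These explicit paths are routine and mirror the argument given for $\mathrsfs{B}_U$.

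Second, I would establish $\Sigma^*(u+v)\Sigma^*\subseteq\Syn(\mathrsfs{D}_{u,v})$ by computing $Q\dt u$ and $Q\dt v$ directly and checking each is a singleton. Tracking the image of the whole state set while reading $u$ letter by letter, I expect every state to be funnelled onto the $u$-cycle and then driven to one common endpoint, so that $u\in\Syn(\mathrsfs{D}_{u,v})$, and symmetrically $v\in\Syn(\mathrsfs{D}_{u,v})$. Since $\Syn(\mathrsfs{D}_{u,v})$ is an ideal, the desired inclusion follows from these two memberships.

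The main obstacle is the reverse inclusion $\Syn(\mathrsfs{D}_{u,v})\subseteq\Sigma^*(u+v)\Sigma^*$, i.e.\ that no word avoiding both $u$ and $v$ can synchronize. For this I would look for an invariant certifying $|Q\dt w|\ge 2$ for every $w\notin\Sigma^*(u+v)\Sigma^*$: concretely, I would track the images of the two start states of the cycles and argue, by induction on $|w|$ with a case analysis on the last letter read, that these two images can be forced to coincide only at the instant a complete copy of $u$ or of $v$ has just been spelled out. This is exactly where the exclusion of the almost-unary words is essential: for a word such as $a^{n-1}b$ a single deviating letter already merges too many states, producing spurious short reset words (even a one-letter one), which is the degeneracy the hypotheses rule out. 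Checking that the invariant survives every letter, including the redirected completion transitions, is the technically delicate heart of the argument.
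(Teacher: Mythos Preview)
Your construction diverges from the paper's in one crucial place, and that divergence breaks it. You send every \emph{deviating} letter from a state of the $u$-component into the $v$-component (and vice versa), so that the failure links always cross over. The paper does the opposite: the automata $\mathrsfs{A}'_u$ and $\mathrsfs{A}'_v$ keep all their internal failure links, and the \emph{only} edge leaving $Q^u$ is the single completion transition $u[1..n-1]\xrightarrow{u[n]}s$ into $Q^v$ (with $s$ the longest prefix of $v$ that is a suffix of $u$), and symmetrically the only edge leaving $Q^v$ is $v[1..m-1]\xrightarrow{v[m]}p$.

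Your version already fails on $u=a^n$, $v=b^n$ (both allowed by the hypotheses). With $n=3$, your rules give $a_0\xrightarrow{a}a_1\xrightarrow{a}a_2\xrightarrow{a}b_0\xrightarrow{a}a_1$, so the letter $a$ has the $3$-cycle $\{a_1,a_2,b_0\}$ as an attractor and $Q\dt a^3=\{a_1,a_2,b_0\}$; hence $u=aaa$ is not synchronizing. The problem is exactly what you hoped would happen in the forward inclusion: once $u$ is completed you jump to $\varepsilon^v$, but the next letters of $u$ are ``deviating'' there and immediately throw you back onto the $u$-spine at a non-initial position, so states that entered the spine at different times never coalesce.

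The paper's choice also makes the reverse inclusion a two-line argument rather than the delicate invariant you anticipate: since $Q^u$ and $Q^v$ each have a unique outgoing edge, any synchronizing word must push $\varepsilon^v$ across into $Q^u$ (or $\varepsilon^u$ into $Q^v$), and the only way to cross is to spell $v$ (respectively $u$). Finally, the exclusion of the near-unary words is used differently than you guess: it is exactly the condition under which each $\mathrsfs{A}'_w$ is strongly connected on its own, so that adding the two cross edges yields global strong connectivity.
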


\begin{proof}
In order to obtain $\mathrsfs{D}_{u,v}$ we combine minimal automata for the languages $\Sigma^*u\Sigma^*$ and $\Sigma^*v\Sigma^*$.
For a letter $x\in\{a,b\}$ by $\overline{x}$ we denote its \emph{complementary} letter, i.e. $\overline{a}=b$, and $\overline{b}=a$.
Recall the construction of the minimal automaton recognizing the language $\Sigma^{*}w\Sigma^{*}$, where $w\in\Sigma^+$.
%Let us remind the construction of the minimal DFA recognizing a principal ideal language, i.e. an ideal language generated by a single word.
%Fix a word $w$ over $\Sigma=\{a,b\}$. Let $|w|=k$.
It is well-known that this automaton has $|w|+1$ states.
We enumerate the states of this automaton by the prefixes of the word $w$ so that the state $w[1..i]$ maps to the state $w[1..i+1]$ under the action of the letter $w[i+1]$ for all $i$, $0\leq i<k$. The other letter $\overline{w[i+1]}$ sends the state $w[1..i]$ to state $p$ such that $p$ is the maximal prefix of $w$ that appears in $w[1..i+1]$ as a suffix. The state $w$ is the sink state of the automaton.
The initial state is $\varepsilon$ and the unique final state is $w$, see Fig.\ref{min_aut} (the transitions labeled by complementary letters $\overline{w[i]}$ are not shown).

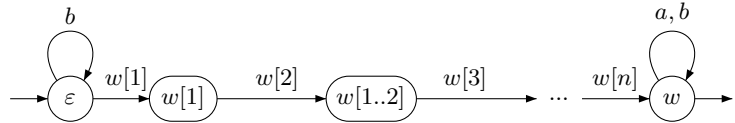
\begin{figure}[ht]
\begin{center}
  \begin{picture}(95,10)
   \gasset{Nw=6,Nh=6,Nmr=3}
   \thinlines
   \node[Nmarks=i](A1)(10,0){$\varepsilon$}
   \node[Nw=9](A2)(25,0){$w[1]$}
   \node[Nw=12](A3)(50,0){$w[1..2]$}
   \node[Nmarks=f](A4)(90,0){$w$}
   \node[Nframe=n](A5)(75,0){$...$}
%   \node[Nframe=n](Ai)(0,0){$$}
%   \node[Nframe=n](At)(65,0){$$}
   \drawloop[loopdiam=6,loopangle=90](A4){$a,b$}
   \drawloop[loopdiam=6,loopangle=90](A1){$b$}
   \drawedge(A1,A2){$w[1]$}
   \drawedge(A2,A3){$w[2]$}
   \drawedge(A5,A4){$w[n]$}
   \drawedge(A3,A5){$w[3]$}
%   \drawedge(A3,At){$$}
   \end{picture}
\end{center}
\caption{The minimal DFA $\mathscr{A}_w$.}
\label{min_aut}
\end{figure}

Construct minimal automata $\mathscr{A}_{u}$ and $\mathscr{A}_{v}$. Denote by $\mathrsfs{A}'_{u}$ the automaton obtained from $\mathrsfs{A}_{u}$ by deleting the sink state and the transition from $u[1..n-1]$ labeled by $u[n]$. Denote by $\mathrsfs{A}'_{v}$ the corresponding automaton for $v$. Define the action of letters $u[n]$ and $v[m]$ on states $u[1..n-1]$ and $v[1..m-1]$ as follows. Denote by $p$ the state in $\mathrsfs{A}'_u$ corresponding to the maximal prefix of $u$ that appears in $v$ as a suffix. Denote by $s$ the state in $\mathrsfs{A}'_v$ corresponding to the maximal prefix of $v$ that appears in $u$ as a suffix. %For example, take words $u=aaabaa$ and $v=baabba$, then $p=a$ and $s=baa$.
We put $u[1..n-1]\dt u[n]=s$ and $v[1..m-1]\dt v[m]=p$.
%If $s$ is empty word then $u[n]$ maps $u[1..n-1]$ to the state $\varepsilon$ in $\mathrsfs{A}'_{v}$. The case when $p$ is empty word is considered analogously. Assume that $p\neq \varepsilon$ and $s\neq \varepsilon$. Then $u[n]$ maps $u[1..n-1]$ to $v[1..|s|]$ and $v[m]$ maps $v[1..m-1]$ to $u[1..|p|]$.
Denote the resulting automaton by $\mathrsfs{D}_{u,v}$ and prove that it satisfies the desired properties.
Figures \ref{ExTwoWords1},\ref{ExTwoWords2},\ref{ExTwoWordsB} illustrate the construction for $u=abaab$ and $v=babab$.

\begin{figure}[ht]
\begin{center}
\begin{picture}(75,12)(0,2)
   \gasset{Nw=6,Nh=6,Nmr=3}
\thinlines
\node(A)(0,8){$\varepsilon$}
\node(B)(13,8){$a$}
\node[Nw=7](C)(27,8){$ab$}
\node[Nw=8](D)(43,8){$aba$}
\node[Nw=9](E)(59,8){$abaa$}
\node[Nw=10](F)(75,8){$abaab$}
\drawedge(A,B){$a$}
\drawedge(B,C){$b$}
\drawedge(C,D){$a$}
\drawedge(D,E){$a$}
\drawedge(E,F){$b$}
\drawloop[loopdiam=6,loopangle=90](A){$b$}
\drawloop[loopdiam=6,loopangle=90](B){$a$}
\drawloop[loopdiam=6,loopangle=90](F){$a,b$}
\drawedge[curvedepth=5](C,A){$b$}
\drawedge[curvedepth=-8,ELside=r](E,B){$a$}
\drawedge[curvedepth=-3,ELside=r](D,C){$b$}
\end{picture}
\end{center}
\caption{The minimal DFA recognizing $\Sigma^{*}abaab\Sigma^{*}$.}
\label{ExTwoWords1}
\end{figure}
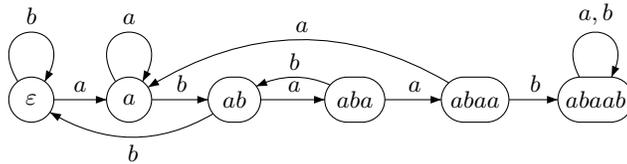
\begin{figure}[ht]
\begin{center}
\begin{picture}(75,12)
   \gasset{Nw=6,Nh=6,Nmr=3}
\thinlines
\node(A)(0,8){$\varepsilon$}
\node(B)(13,8){$b$}
\node[Nw=7](C)(27,8){$ba$}
\node[Nw=8](D)(43,8){$bab$}
\node[Nw=9](E)(59,8){$baba$}
\node[Nw=10](F)(75,8){$babab$}
\drawedge(A,B){$b$}
\drawedge(B,C){$a$}
\drawedge(C,D){$b$}
\drawedge(D,E){$a$}
\drawedge(E,F){$b$}
\drawloop[loopdiam=6,loopangle=90](A){$a$}
\drawloop[loopdiam=6,loopangle=90](B){$b$}
\drawloop[loopdiam=6,loopangle=90](F){$a,b$}
\drawedge[curvedepth=5](C,A){$a$}
\drawedge[curvedepth=-7,ELside=r](D,B){$b$}
\drawedge[curvedepth=12](E,A){$a$}
\end{picture}
\end{center}
\caption{The minimal DFA recognizing $\Sigma^{*}babab\Sigma^{*}$.}
\label{ExTwoWords2}
\end{figure}
\begin{figure}[ht]
\begin{center}
\begin{picture}(60,45)
   \gasset{Nw=6,Nh=6,Nmr=3}
\thinlines
\node(A1)(0,8){$\varepsilon$}
\node(B1)(13,8){$b$}
\node[Nw=7](C1)(27,8){$ba$}
\node[Nw=8](D1)(43,8){$bab$}
\node[Nw=9](E1)(59,8){$baba$}
%\node[Nw=10](F)(75,7){$babab$}
\drawedge(A1,B1){$b$}
\drawedge(B1,C1){$a$}
\drawedge(C1,D1){$b$}
\drawedge(D1,E1){$a$}
%\drawedge(E,F){$b$}
\drawloop[loopdiam=6,loopangle=90](A1){$a$}
\drawloop[loopdiam=6,loopangle=90](B1){$b$}
%\drawloop[loopdiam=6,loopangle=90](F){$a,b$}
\drawedge[curvedepth=5](C1,A1){$a$}
\drawedge[curvedepth=-6,ELside=r](D1,B1){$b$}
\drawedge[curvedepth=12](E1,A1){$a$}
\node(A2)(0,36){$\varepsilon$}
\node(B2)(13,36){$a$}
\node[Nw=7](C2)(27,36){$ab$}
\node[Nw=8](D2)(43,36){$aba$}
\node[Nw=9](E2)(59,36){$abaa$}
%\node[Nw=10](F)(75,7){$abaab$}
\drawedge(A2,B2){$a$}
\drawedge(B2,C2){$b$}
\drawedge(C2,D2){$a$}
\drawedge(D2,E2){$a$}
%\drawedge(E,F){$b$}
\drawloop[loopdiam=6,loopangle=90](A2){$b$}
\drawloop[loopdiam=6,loopangle=90](B2){$a$}
%\drawloop[loopdiam=6,loopangle=90](F){$a,b$}
\drawedge[curvedepth=5](C2,A2){$b$}
\drawedge[curvedepth=-8,ELside=r](E2,B2){$a$}
\drawedge[curvedepth=-3,ELside=r](D2,C2){$b$}
\drawedge[curvedepth=-3,ELside=r,dash={1.5}0](E2,B1){$b$}
\drawedge[ELside=r,dash={1.5}0](E1,C2){$b$}
\end{picture}
\end{center}
\caption{The DFA $\mathrsfs{D}_{u,v}$.}
\label{ExTwoWordsB}
\end{figure}
\noindent The following claim is rather easy to see. The explicit proof can be found in \cite{PrincIdArxiv}.
\begin{claim}
If $w\in\Sigma^n\setminus\{a^{n-1}b, ab^{n-1}\}$, then the automaton $\mathrsfs{A}'_w$ is strongly connected.
\end{claim}
%\begin{proof}
%Consider an arbitrary state $w[1..i]$
% of $\mathrsfs{A}'_w$. This state is obviously reachable from the state $\varepsilon$. We show that the state $\varepsilon$ is reachable from $w[1..i]$.
% Let $w[1..j]=w[1..i]\dt c^{|w|}$, where $c=\overline{w[i+1]}$. The state $w[1..j]$ is a maximal prefix of $w$ of the form $c^k$. If $c=b$,
% then $w[1..j]=\varepsilon$, so we are done. Suppose $c=a$. Apply $b$ to the state $w[1..j]$ (this is possible, since $w\ne a^{n-1}b$). Next we apply $\overline{w[j+2]}$. If $\overline{w[j+2]}=b$, then $w[1..j]\dt bb =\varepsilon$. If $\overline{w[j+2]}=a$, then
% $w[1..j]\dt bab =\varepsilon$ (since $w\ne ab^{n-1}$). So in both cases the state $\varepsilon$ is reachable from $w[1..i]$.
% \end{proof}
% \qed
By the Claim automata $\mathrsfs{A}'_{u}$ and $\mathrsfs{A}'_{v}$ are strongly connected. By the definition of the action of letters $u[n]$ and $v[m]$ on states $u[1..n-1]$ and $v[1..m-1]$, the resulting automaton $\mathrsfs{D}_{u,v}$ is also strongly connected. %, see Fig.~\ref{Structure}.
%
%\begin{figure}[ht]
%\begin{center}
%\begin{picture}(25,12)
%   \gasset{Nw=6,Nh=7,Nmr=3}
%\thinlines
%\node[Nw=10,dash={1.5}0](A)(0,6){$\mathrsfs{A}'_{u}$}
%\node[Nw=10,dash={1.5}0](B)(25,6){$\mathrsfs{A}'_{v}$}
%\drawedge[curvedepth=3](A,B){$u[n]$}
%\drawedge[curvedepth=3](B,A){$v[m]$}
%\end{picture}
%\end{center}
%\caption{Structure of the DFA $\mathscr{B}$.}
%\label{Structure}
%\end{figure}

 Now we are going to verify that  $u,v\in \Syn(\mathrsfs{D}_{u,v})$. 
 The state set of $\mathrsfs{D}_{u,v}$ is the union of the state set of $\mathrsfs{A}'_{u}$ (denoted by $Q^u$), and the state set of $\mathrsfs{A}'_{v}$ (denoted by $Q^v$). To avoid confusion when necessary we will use the upper indices $u$ and $v$ for the states in $Q^u$ and $Q^v$ respectively.
Let $w$ be an arbitrary word. We claim that $\varepsilon^{u} \dt w = r$, where $r$ is the maximal prefix of either $u$ or $v$ which is a suffix of $w$.
Let us consider the path from $\varepsilon^{u}$ to $r$. 
%If all of them belong to $Q^{u}$ then the claim is true by the definition of $\mathrsfs{A}_{u}$.
As long as we do not use modified transitions, i.e. the ones that lead from $Q^u$ to $Q^v$ or vice versa, the claim holds true by the definition of $\mathrsfs{A}'_{u}$ and $\mathrsfs{A}'_{v}$. Suppose now that the path contains a transition $u[1 .. n - 1] \xrightarrow{u[n]} s$ and 
$\varepsilon^{u} \dt w' = u[1..n-1]$, where $w'$ is a prefix of $w$. 
Let $s'$ be the maximal prefix of the word $v$ which is a suffix of $w'u[n]$. Note that $w'u[n]$ has $u$ as a suffix. Therefore $|s'| < |u|$,
otherwise $u$ is a factor of $v$. Since $|s'| < |u|$ we have $s' = s$.
Similar reasoning applies in case of transition $v[1 .. m - 1] \xrightarrow{v[m]} p$. Therefore, the claim holds true.
It is not hard to see that also $\varepsilon^{v} \dt w = r'$, where $r'$ is the maximal prefix of either $u$ or $v$ which is a suffix of $w$.

Now we are ready to show that $u$ is a synchronizing word for $\mathrsfs{D}_{u,v}$. By the definition of $\mathrsfs{D}_{u,v}$ we have $\varepsilon^{u} \dt u = s$.
Let us consider an arbitrary state $t \in Q^u$.
Let $\varepsilon^{u} \dt tu = r$. Note that the maximal prefix of the word $u$ which is a suffix 
of $tu$ is equal to $u$. Then by the claim $r$ is a prefix of $v$. Since $u$ is not a factor of $v$ we have $|r| < |u|$. Thus, $r = s$ due
to maximality. Since $\varepsilon^{u} \dt t = t$ we have $t \dt u = s$. Thus, $Q^u \dt u = \{s\}$. Arguing in the same way for the state
$\varepsilon^v$ we get $Q^v \dt u = \{s\}$. So, $u$ is synchronizing. Analogously, one can show that $v$ is synchronizing.

To complete the proof it remains to verify that each word from the set $\Syn(\mathrsfs{D}_{u,v})$ contains $u$ or $v$ as a factor. Take $w\in \Syn(\mathrsfs{D}_{u,v})$ and a state $r\in Q^u$. If $r\dt w\in Q^u$, then $w$ maps all states in the component $Q^v$ into the same state. In particular, $\varepsilon^v\dt w\in Q^u$. Thus $v$ appears in $w$ as a factor. Analogously if $r\dt w\in Q^v$, the word $u$ appears as a factor in $w$. So we proved that $\Syn(\mathrsfs{D}_{u,v})=\Sigma^{*}(u+v)\Sigma^{*}$.
\qed
\end{proof}

\textbf{Acknowledgement}. The authors acknowledge support from the Presidential Programm for young researchers, grant MK-266.2012.1.

\end{document}